\newtheorem{thm}{Theorem} 
\newtheorem{lemma}{Lemma}
\newtheorem{remark}{Remark}
\newcommand{\eps}{\varepsilon}
\renewcommand{\P}{\mathbb{P}}
\newcommand{\var}{\mbox{\rm Var}}
\newcommand{\E}{\mathbb{E}}
\def\p{\mathbb{P}}
\begin{document}

\title{Reconstructing pedigrees using probabilistic analysis of ISSR amplification.
\thanks{This work was supported by MODEMAVE research project from the R\'egion Pays de la Loire. Acces to molecular data
was supported by both EUROGENI project, funded by R\'egion Pays de la Loire (dynamiques de fili\`ere) and by BRIO project 
funded by same R\'egion Pays de la Loire and the Fonds Unique Interminist\'eriel.}}

\author{Lo\"ic Chaumont$^1$, Val\'ery Mal\'ecot$^2$, Richard Pymar$^3$ and Chaker Sbai$^{4}$}

\date{\today}

\maketitle

\noindent {\small $^1$ LAREMA -- UMR CNRS 6093, Universit\'e d'Angers, 2 bd Lavoisier, 49045 Angers Cedex 01\\
%\email{loic.chaumont@univ-angers.fr}
$^2$ IRHS -- UMR 1345, Agrocampus Ouest Angers, 2 rue Le N\^otre, 49045 Angers Cedex 01\\
%\email{valery.malecot@agrocampus-ouest.fr}
$^3$ Department of Mathematics -- University College London, Gower Street, London WC1E 6BT}\\
%\email{r.pymar@ucl.ac.uk}
$^4$ PEGASE -- UMR 1348, Agrocampus Ouest Rennes, 65 rue de Saint-Brieuc, CS 84215, 35042 Rennes Cedex 
%\email{sbaichaker@yahoo.fr}

\begin{abstract}
Data obtained from ISSR amplification may readily be extracted but only allows us to know, for each gene, if a specific 
allele is present or not. From this 
partial information we provide a probabilistic method to reconstruct the pedigree corresponding to some families of diploid cultivars.
This method consists in determining for each individual what is the most likely couple of parent pair amongst all older individuals, 
according to some probability measure. The construction of this measure bears on the fact that the probability to observe the specific
alleles in the child, given the status of the parents does not depend on the generation and is the same for each gene. This assumption
is then justified from a convergence result of gene frequencies which is proved here. Our reconstruction method is applied to a 
family of 85 living accessions representing the common broom {\it Cytisus scoparius}. 
\end{abstract}

\noindent{\it Keywords}: Pedigree, ISSR amplification, law of reproduction, gene frequency \\
\noindent{\it Mathematics Subject Classification $(2000)$}: 92D25; 92D10; 60F15

\section{Introduction}\label{in}

A pedigree is a graph such that each vertex has indegree equal to 0 or 2 and any outdegree. When it represents family relationships 
between living individuals, edges are directed from parents to children. By reconstruction of the pedigree of a family of 
some set of individuals, we mean a way to determine the most likely pedigree relating theses individuals given some information such 
as phenotype, genotype, date of birth, data obtained from professional breeders,... It may happen that this information is known only for a part of 
the population or even that 
the number of missing individuals is unknown. To each situation corresponds some specific methods. Deterministic methods based on
the maximum parsimony principle and using purely combinatorial arguments allow us to reconstruct the minimal pedigree relating 
individuals in accordance with their types, see Chapter 4 in \cite{ss}, \cite{sh} or \cite{bl}. There are also numerous different stochastic 
methods of reconstruction of pedigrees, see for instance \cite{klkh}, \cite{th}, \cite{bs}, \cite{bl}.
In any case, the method consists in finding a 'nice' probabilistic framework in which we may find the most likely pedigree relating some 
set of individuals. Some models focus on the reconstruction of the lineages by estimating transition probabilities between nodes.
Reconstructing the pedigree then comes down to the construction of a Markov chain. This method is quite popular when making use 
of identity by descent (IBD) data, \cite{klkh}. In this case, a statistical inference based on Monte Carlo Markov chains and Bayesian 
statistics are used to infer transition probabilities between nodes of the graph, \cite{st} and \cite{th}. Coalescence theory may also 
prove to be a powerful tool in reconstruction of pedigrees, as observed in \cite{wklr}.

In the present work, we assume that the known information is of a genomic type and is provided through ISSR amplification for
diploid plant cultivars, which are vegetatively propagated. ISSR amplification was popalurised by \cite{wxk} and largely used in 
genetic diverstity assessment \cite{pss}. Because being vegetatively propagated, the available dataset contains both descendants 
and ancestors in the pedigree, thus both terminal and internal nodes of the graph, while most above listed methods use information 
from last generation descendants (i.e. terminals in the graph). We know the same genotypic information for each individual and we 
assume that there are no missing individuals in the set. ISSR data only allows us to know, for each gene, if a specific allele is 
present or not. In particular, in the case of presence, we do not know if this specific allele is present in both chromosomes
(i.e. at homozygotic state, and transmitted to all the descendants) or if it is present only in one of them (i.e. at heterozygotic state and
thus transmited to only half of the descendants). It actually stems as if we observed the phenotypic expression of a dominant gene
and our model can also be applied to this kind of situation (see the discussion at the end of this paper). 
Then from this partial information we  provide a probabilistic method to reconstruct the pedigree corresponding to some families of 
diploid plant cultivars. This method consists in determining for each individual what is the most likely couple of parent pair amongst 
all older individuals,  according to some probability measure. More specifically, if $g_1,\dots,g_n$ are individuals ranked in their birth 
order, then for each $i=1,\dots,n$, we are looking for a couple of individuals possibly non distinct in the set $\{g_1,\dots,g_{i-1}\}$ which 
is the most likely parent pair of $g_i$ according to some probability measure. The construction of this measure bears on the fact that 
the probability to observe the specific alleles in the child, given the status of the parents does not depend on the generation. It only 
depends on the gene frequencies which are supposed to be constant in time. In order to justify this assumption, we prove here that 
gene frequencies converge almost surely, as the number of crossbreeding increases, toward an equilibrium which satisfies the 
Hardy-Weinberg condition.

Our reconstruction method is applied to a family of 85 living accessions representing the common broom {\it Cytisus scoparius} and 
related cultivated hybrids (Cytisus x dallimorei, Cytisus x boskoopi). The latter are diploid sexed plants whose crossbreedings have 
occurred in the past 200 years from a set of founders which is to be specified by our model. For each individual, 6 markers are used 
to highlight presence or absence of a particular allele in a high number of distinct regions of the genome. These 6 markers provide a 
total of more than 420 distinct bands for these 85 accessions, and each band has been treated as present or absent for each individual.
The results of our model applied to these particular  data are described in Section \ref{re}. Section \ref{mm} is devoted to the 
presentation of the model as well as to the convergence result of gene frequencies which justifies its relevance. Then we give some conclusions in Section \ref{conc}, comparing our results to the  existing literature and highlighting some other frameworks where 
our method can be used. 

\section{Materials and Methods}\label{mm}

\subsection{Model overview}\label{model}

We represent a pedigree as a directed graph in which each vertex corresponds to an individual and each directed edge corresponds 
to a parent-child relationship, with the edge going from parent to child.  The individuals are partitioned into two sets, $F$ and $F^\complement$, referred to as the founders and the non-founders respectively. The pedigree specifies, for every non-founder 
individual, two (not necessarily  distinct) individuals which, according to some probabilistic model shortly defined, are the most likely
parents.\\

\noindent We first define the law of reproduction in the population. Let $n$ be the number of individuals, denoted $g_1,\ldots,g_n$ 
and let $m\in\mathbb{N}$ be the number of genes for which we observe the presence or absence of a specific allele. More specifically, 
when proceeding to the  ISSR amplification, for each gene, we receive from some marker, a binary response: either the 
allele is present in at least one of the two chromosomes or it is absent in both. In particular, when the allele is present, 
we do not know if it is present on the two chromosomes. Actually, it is equivalent to consider that the allele which is highlight by the 
marker is dominant and that we only observe the phenotype of the individual. For each individual $g_i$ and each gene 
$\ell\in\{1,\ldots,m\}$, let $x_\ell(g_i)\in\{0,1\}$ be the indicator of band absences (0-values) and presences (1-values) of individual 
$g_i$ obtained during the ISSR amplification process. Hence the {\it apparent genotype} of each individual $g$ will be identified to the 
element $x(g):=(x_1(g),x_2(g),\dots,x_m(g))$ of $\{0,1\}^m$. Note that the event $\{x_\ell(g)=1\}$ means "one observes the presence of the 
allele specific to gene $\ell$ in individual $g$" or equivalently "the allelic combination of gene $\ell$ in individual $g$ is 01 or 11".\\  

\noindent  Each individual $g$ has an associated date of birth, denoted $t(g)$.  We set $t(g)=0$ if the individual $g$ was obtained 
from the wild, in which case it will be considered as a founder.  Otherwise set $t(g)$ equal to the date the individual was accessioned.
We order the individuals so that for $i<j$, $t(g_i)<t(g_j)$, whenever $t(g_j)>0$ (it is assumed that dates of birth are distinct from each 
other). The basic principles of our reconstruction method are:
\begin{itemize}
   \item[$(a)$] a uniform prior on probability $(g_j,g_k)$ are the parents of individual $g_i$ over all pairs 
   $(g_j,g_k)$ with $\max(t(g_j),t(g_k))< t(g_i)$;
  \item[$(b)$] no missing individuals, that is the parents of each non-founder individual $g_i$ belong to the set
  $\{g_1,\ldots,g_n\}\setminus\{g_i\}$.
\end{itemize}

\noindent Let us denote by $\hat{g}$ and $\bar{g}$ the parents of the individual $g$. 
When they breed, the two parents $\hat{g}$ and $\bar{g}$ with respective apparent genotypes $x(\hat{g})$ and $x(\bar{g})$ will give 
birth to the individual $g$ with apparent genotype $x(g)$ according to the following rules: 
\begin{itemize}
  \item[$(c)$] independence of the coordinates of $x(g)$, that is, $\{x_\ell(g)=1\}$ and \mbox{$\{x_{\ell'}(g)=1\}$} are independent for 
  all $\ell'\neq \ell$;
  \item[$(d)$] there are constants $\delta\in(-1/2,1/2)$ and $\varepsilon\in(0,1/2)$ called the {\it errors} and for each~$\ell$, there are constants  
  $p_\ell\in(3/4,1)$ and $q_\ell\in(1/2,1)$ such 
that for each individual $g$ and  
   \begin{itemize}
    \item $\P(\{x_\ell(g)=1\}|\,\{x_\ell(\hat{g})=1\},\{x_\ell(\bar{g})=1\})=\min(p_\ell-\delta,1)$,
    \item $\P(\{x_\ell(g)=1\}|\,\{x_\ell(\hat{g})=0\},\{x_\ell(\bar{g})=1\})=\min(q_\ell-\delta,1)$,
    \item $\P(\{x_\ell(g)=1\}|\,\{x_\ell(\hat{g})=0\},\{x_\ell(\bar{g})=0\})=\varepsilon$\,.
   \end{itemize}
\end{itemize}
Principles $(a)$ and $(b)$ should rather be considered as the most natural assumptions in the absence of any particular
constraint in the evolution of the population. Note that according to $(a)$, the father and mother can be the same individual, which is 
standard in plant populations. Principle $(c)$ means that the evolutions of different genes are independent 
between each other. In our specific example we will select a particular set of genes whose independence will be checked by means
of a statistical test, see Section \ref{re}.\\

Let us now concentrate ourself on principle $(d)$.
Constants $\delta$ and $\varepsilon$ are actually experimental errors, so they do not depend on gene $\ell$. It appears that 
when the parents satisfy $\{x_\ell(\hat{g})=1\},\{x_\ell(\bar{g})=1\}$ 
(resp. $\{x_\ell(\hat{g})=0\},\{x_\ell(\bar{g})=1\}$), the probability to observe $\{x_\ell(g)=1\}$ for the child is 
less than the theoretical probability $p_\ell$ (resp. $q_\ell$), that is $p_\ell-\delta$ (resp. $q_\ell-\delta$). Similarly, it can happen that 
when the parents satisfy $\{x_\ell(\hat{g})=0\},\{x_\ell(\bar{g})=0\}$ one observes  $\{x_\ell(g)=1\}$ for the child. This defines error
$\varepsilon$. As showed hereafter, we have $p_\ell\in(3/4,1)$ and $q_\ell\in(1/2,1)$, and the estimation from our data, see Section
\ref{re}, shows that $\delta$ and $\varepsilon$ are actually of order 0.1.\\

Besides, we recall that despite the reproduction is sexed, since we are concerned with plant populations, each individual can either 
be male or female, so that when referring to the parents  $g_j$ and $g_k$ of the individual $g_i$, the mother 
and the father are not distinguished. In particular we have 
$\P(\{x_\ell(g)=1\}|\,\{x_\ell(\hat{g})=0\},\{x_\ell(\bar{g})=1\})=\P(\{x_\ell(g)=1\}|\,\{x_\ell(\hat{g})=1\},\{x_\ell(\bar{g})=0\})$.\\

\noindent We now focus on the computation of the conditional probabilities appearing in $(d)$. In order to compute the theoretical 
values $p_\ell$ and $q_\ell$, let us assume that there is no experimental error, i.e.~$\delta=\eps=0$, so that expressions in $(d)$ 
are $\P(\{x_\ell(g)=1\}|\,\{x_\ell(\hat{g})=1\},\{x_\ell(\bar{g})=1\})=p_\ell$ and
 $\P(\{x_\ell(g)=1\}|\,\{x_\ell(\hat{g})=0\},\{x_\ell(\bar{g})=1\})=q_\ell$. Let us now compute $p_\ell$ and $q_\ell$ in terms of the
gene frequencies. We will prove in the next section that for each gene, the frequencies of the three genotypes $00$, $01$ 
and $11$,  converge toward  some equilibrium, as the number of crossbreeding increases. Let us denote respectively by 
$\pi_{00}(\ell)$, $\pi_{01}(\ell)$  and $\pi_{11}(\ell)$ these frequencies. Then in our model, we assume that this equilibrium 
is attained, so that:
\begin{itemize}
\item[$(e)$] $\pi_{00}(\ell)$, $\pi_{01}(\ell)$ and $\pi_{11}(\ell)$ do not depend on time.
\end{itemize}
Note that here, by time, we mean a scale which is incremented by successive crossbreedings. Assumption $(e)$ will be justified 
in the next section. When no confusion is possible, 
we will forget about the index $\ell$ in $\pi_{00}(\ell)$, $\pi_{01}(\ell)$ and $\pi_{11}(\ell)$.
Let us compute $p_\ell$ and $q_\ell$ in terms of $\pi_{00}$, $\pi_{11}$ and $\pi_{01}$. For a pair of parents $(\hat{g},\bar{g})$  
chosen uniformly at random in the sub-population $\{g':t(g')<t(g)\}$, the probability
to observe $x_\ell(\hat{g})=1$ and $x_\ell(\bar{g})=1$  is 
\[\P(\{x_\ell(\hat{g})=1\},\{x_\ell(\bar{g})=1\})=\pi_{11}^2+2\pi_{01}\pi_{11}+\pi_{01}^2\,.\]
When they breed and give a child $g$, the probability to observe $x_\ell(g)=1$, $x_\ell(\hat{g})=1$ and $x_\ell(\bar{g})=1$ is 
\[\P(\{x_\ell(g)=1\},\{x_\ell(\hat{g})=1\},\{x_\ell(\bar{g})=1\})=\pi_{11}^2+2\pi_{01}\pi_{11}+3\pi_{01}^2/4\,.\]
We obtain that at any time, $p_\ell$ is given by
\[p_\ell=\frac{\pi_{11}^2+2\pi_{01}\pi_{11}+3\pi_{01}^2/4}{\pi_{11}^2+2\pi_{01}\pi_{11}+\pi_{01}^2}=1-
\frac{\pi_{01}^2}{4(\pi_{01}+\pi_{11})^2}\,.\]
Then $q_\ell$ is obtained in the same way:
\begin{eqnarray*}
&&q_\ell=\frac{\pi_{01}+2\pi_{11}}{2\pi_{01}+2\pi_{11}}\,.
\end{eqnarray*}
The frequencies $\pi_{00}$, $\pi_{01}$ and $\pi_{11}$ belonging to $(0,1)$ it is easy to check from the above
expressions that $p_\ell\in(3/4,1)$ and $q_\ell\in(1/2,1)$. Furthermore, we have the relationship $p_\ell=q_\ell(2-q_\ell)$. 
In Theorem \ref{T:main}, we show that in fact the triplet of gene frequencies $(\pi_{00},\pi_{01},\pi_{11})$ satisfies the 
Hardy-Weinberg equilibrium, that is $\pi_{01}=2\sqrt{\pi_{00}\pi_{11}}$ and using this relation, we deduce that 
\begin{equation}\label{pandq}
q_\ell=\frac1{1+\sqrt\pi_{00}},\qquad p_\ell=\frac{1+2\sqrt\pi_{00}}{(1+\sqrt\pi_{00})^2}.
\end{equation}

We shall now define the set of probability measures $\mu$ from which the most likely pedigree will be derived. This definition
is based on the conditional probabilities:
\[\P(x(g)=a\,|\,x(\hat{g})=\hat{a},\,x(\bar{g})=\bar{a})=\prod_{\ell=1}^m \p(x_\ell(g)=a_\ell\,|\,x_\ell(\hat{g})=\hat{a}_\ell,
\,x_\ell(\bar{g})=\bar{a}_\ell)\,,\]
which are obtained from all acceptable triplets of individuals $(g,\hat{g},\bar{g})$ and their apparent genotypes 
$a=(a_1,\dots,a_m)$,  $\hat{a}=(\hat{a}_1,\dots,\hat{a}_m)$ and $\bar{a}=(\bar{a}_1,\dots,\bar{a}_m)$ in $\{0,1\}^m$.
More specifically, the set of individuals $\{g_1,\dots,g_n\}$ and their apparent genotype being given, for all triples 
$(i,j,k)\in\{1,\ldots,n\}^3$ and for each gene $\ell$, we first define the agreements/disagreements indicators
between the genotype of an individual $g_i$ and this of the possible couple of parents $(g_j,g_k)$:
\begin{eqnarray*}
&&p_{ijk}^{(\ell)}={\bf 1}_{\{x_\ell(g_j)=x_\ell(g_k)=x_\ell(g_i)=1\}}\,,\;\;\;\bar{p}_{ijk}^{(\ell)}={\bf 1}_{\{x_\ell(g_j)=
x_\ell(g_k)=1\,,\,x_\ell(g_i)=0\}}\,,\\
&&q_{ijk}^{(\ell)}={\bf 1}_{\{x_\ell(g_j)\neq x_\ell(g_k)\,,\,x_\ell(g_i)=1\}}\,,\;\;\;\bar{q}_{ijk}^{(\ell)}={\bf 1}_{\{x_\ell(g_j)
\neq x_\ell(g_k)\,,\,x_\ell(g_i)=0\}}\,,\\
&&\varepsilon_{ijk}=\sum_{\ell=1}^m {\bf 1}_{\{x_\ell(g_j)=x_\ell(g_k)=0\,,\,x_\ell(g_i)=1\}}\,,\;\;\;    
\bar{\varepsilon}_{ijk}=\sum_{\ell=1}^m {\bf 1}_{\{x_\ell(g_j)=x_\ell(g_k)=x_\ell(g_i)=0\}} \,.
\end{eqnarray*}
\noindent Now define $p_{\delta,\ell}=\min(p_\ell-\delta,1)$, $q_{\delta,\ell}=
\min(q_\ell-\delta,1)$,  $\bar{p}_{\delta,\ell}=1-p_{\delta,\ell}$, $\bar{q}_{\delta,\ell}=1-q_{\delta,\ell}$, $\bar{\varepsilon}=
1-\varepsilon$ and
\[
\nu_i(j,k)=\left\{
  \begin{array}{ll}
   \varepsilon^{\varepsilon_{ijk}}\cdot\bar{\varepsilon}^{\bar{\varepsilon}_{ijk}}
   \prod_{\ell=1}^m p_{\delta,\ell}^{p_{ijk}^{(\ell)}}\cdot\bar{p}_{\delta,\ell}^{\bar{p}_{ijk}^{(\ell)}}
   \cdot q_{\delta,\ell}^{q_{ijk}^{(\ell)}}
   \cdot\bar{q}_{\delta,\ell}^{\bar{q}_{ijk}^{(\ell)}}
   \,,\;\;\hbox{if $j\le k<i$\,,} \\
    0\,, \qquad \hbox{otherwise.}
  \end{array}
\right.
\]
Then for each $i=2,\dots,n$, the probability measure $\mu_i$ on $\{1,\dots,n\}^2$ is explicitly defined in terms of $x$ by
\[\mu_i(j,k)=\frac{\nu_i(j,k)}{z_i}\,,\;\;j,k\in\{1,\dots,n\}\,,\]
where $z_i:=\sum_{j,k}\nu_i(j,k)$ is a normalising constant. We readily check that $z_i>0$ for all $i$ such that $t(g_i)>0$. 
Moreover, individuals $g_i$ such that $t(g_i)=0$ are necessarily founders (i.e. $g_i\in F$), hence their parents do not belong to 
the current pedigree, so in this case, we set
\[\mu_1(j,k)=0\,,\;\;j,k\in\{1,\dots,n\}\,.\]

\noindent Fix a \emph{threshold probability} $p\in(0,1)$. Then an individual $g_i$ is in the set $F^\complement$ of non founder 
individuals, only if there exists a 
pair $(j,k)\in\{1,\ldots,n\}^2$ such that $\mu_i(j,k)\ge p$ with $j\le k<i$ (it follows that the partitioning depends on the value of $p$).\\

\noindent For each individual $g_i\in F^\complement$, we wish to determine $g_j$ and $g_k$ (possibly equal), such that the following two conditions are satisfied:
\begin{enumerate}
  \item $j\le k<i$ ($g_j$ and $g_k$ accessioned before $g_i$);
  \item $\mu_i(j,k)=\max_{j',k'}\{\mu_i(j',k'):\,j'\le k'<i\}$ ($g_j$ and $g_k$ maximize the likelihood).
\end{enumerate}
We remark that by definition of $F^\complement$, it follows that if we have found such a pair $g_j$ and $g_k$, then 
$\mu_i(j,k)\ge p$.

Note also that the normalization of the probability measure $\mu$ is relevant only for the comparison with the threshold probability. 
Steps 1.~and 2.~define the algorithm from which we performed the program in $R$ which provides the reconstructions of 
pedigrees, see Section \ref{re}.  

\subsection{Convergence to equilibrium}\label{conv}

In this subsection, we are interested in the dynamics of the frequencies of each genotype in the population. As already mentioned 
in the previous section, our reconstruction method strongly bears on the assumption that the frequencies $\pi_{00}$, $\pi_{01}$ 
and $\pi_{11}$ of the types $00$, $01$ and $11$ do not depend on time, that  is condition $(e)$ in subsection \ref{model}. We will
show in the present subsection that as the number of crossbreeding goes on, these frequencies converge almost surely to some 
random equilibrium. This result actually justifies assumption $(e)$.\\

From time $n=0$, we rank the crossbreedings in increasing order as they occur. Since the evolutions of genes are independent
of each other, see assumption $(c)$, we only need to consider the dynamics of the frequencies of genotypes $00$, $01$, $11$ 
for one gene. Then let us denote by $\pi_{00}^n$, $\pi_{01}^n$ and $\pi_{11}^n$, the proportion of individuals $g$ with 
genotype $00$, $01$ or $11$ respectively, after the $n$-th crossbreeding. Let us assume that  we start at time $n=0$
with two founders, so that after the $n$-th crossbreeding, $n+2$ individuals are present in the population. That assumes in 
particular that there is no death. Moreover we assume that both alleles exist in the two founders. Then our reproduction law 
described in $(a)$-$(d)$ of the previous subsection may actually be represented as a generalized urn model in which the probability 
of replacement depends on the proportion of individuals in the population, see \cite{pe} and the references theirin. More specifically,
at each step $n$ (crossbreeding), condition $(a)$ tells us that we choose two individuals uniformly at random in the population.\\

Let us define the polynomial function $F:\{(x,y,z)\in[0,1]^3:x+y+z=1\}\rightarrow\mathbb{R}^3$ by
\[F(x,y,z)+(x,y,z)=(xy+x^2+y^2/4,xy+yz+2xz+y^2/2,yz+z^2+y^2/4)\,,\]
and denote by $\mathcal{S}=\{(x,y,z)\in[0,1]^3:F(x,y,z)=0\}$ the zero set of $F$.

We construct $\pi^n$ recursively. Write $F=(F_1,F_2,F_3)$. At each step $n$, two uniformly chosen individuals from the population
 breed and the new frequencies of individuals with types 00, 01 and 11 become:
\begin{eqnarray*}
&&\left\{\begin{array}{ll}
&\pi_{00}^{n+1}=\frac{(n+2)\pi_{00}^n+1}{n+3}\\
&\pi_{01}^{n+1}=\frac{(n+2)\pi_{01}^n}{n+3}\,,\\
&\pi_{11}^{n+1}=\frac{(n+2)\pi_{11}^n}{n+3}\end{array}\right.
\mbox{with probability $\pi_{00}^n\pi_{01}^n+(\pi_{00}^n)^2+(\pi_{01}^n)^2/4=F_1(\pi^n)$},\\
&&\left\{\begin{array}{ll}
&\pi_{00}^{n+1}=\frac{(n+2)\pi_{00}^n}{n+3}\\
&\pi_{01}^{n+1}=\frac{(n+2)\pi_{01}^n+1}{n+3}\,,\\
&\pi_{11}^{n+1}=\frac{(n+2)\pi_{11}^n}{n+3}\end{array}\right.
\mbox{with probability $\pi_{00}^n\pi_{01}^n+\pi_{01}^n\pi_{11}^n+2\pi_{00}\pi_{11}+(\pi_{01}^n)^2/2=F_2(\pi^n)$},\\
&&\left\{\begin{array}{ll}
&\pi_{00}^{n+1}=\frac{(n+2)\pi_{00}^n}{n+3}\\
&\pi_{01}^{n+1}=\frac{(n+2)\pi_{01}^n}{n+3}\,,\\
&\pi_{11}^{n+1}=\frac{(n+2)\pi_{11}^n+1}{n+3}\end{array}\right.
\mbox{with probability $\pi_{01}^n\pi_{11}^n+(\pi_{11}^n)^2+(\pi_{01}^n)^2/4=F_3(\pi^n)$}\,.\\
\end{eqnarray*}

Let us make this construction more formal. First we define a stochastic process $(\delta_n)_n$ with values in 
$\{(1,0,0),(0,1,0),(0,0,1)\}$ in such a way that the law of $\delta_{n+1}$ conditionally on $\pi^0=i_0,\ldots,\pi^n=i_n$ is $F(i_n)$.
Recall that the quantity $(n+2)\pi^n$ represents the population size at time $n$.  Then $\pi^{n+1}$ is defined by 
\[(n+3)\pi^{n+1}=(n+2)\pi^n+\delta_{n+1}\,,\;\;\;n\ge0\,.\]
Let us set 
\[\eta_n=\delta_{n+1}-F(\pi^n)\,,\]
then we readily obtain the following equality
\begin{equation}\label{6179}
\pi^{n+1}=\pi^n+\frac1{n+3}(F(\pi^n)-\pi^n+\eta_n).
\end{equation}

For $u\in[0,1]^3$, let $f_u:\mathbb{R}^+\cup\{ 0\}\to[0,1]^3$ be the solution to the ODE
\begin{align}\label{6279} \left\{\begin{array}{ll}  
\frac{d}{dt}f_u(t)&=F(f_u(t)),\quad t\ge0,\\
f_u(0)&=u.
\end{array}\right.\end{align}

The solution can be calculated explicitly and we easily check that with $f_u(t)=(x_u(t),y_u(t),z_u(t))$ and $u=(x_0,y_0,z_0)$, then
\[ \left\{\begin{array}{ll}  
&x_u(t)=\left(x_0-\frac{(2x_0+y_0)^2}{4}\right)e^{-t}+\frac{(2x_0+y_0)^2}{4}\\
&y_u(t)=-2\left(x_0-\frac{(2x_0+y_0)^2}{4}\right)e^{-t}-\frac{(2x_0+y_0)^2}{2}+2x_0+y_0\\
&z_u(t)= 1+\left(x_0-\frac{(2x_0+y_0)^2}{4}\right)e^{-t}+\frac{(2x_0+y_0)^2}{4}-2x_0-y_0.
\end{array}\right.\]

We aim to show almost-sure convergence of $\pi^n=(\pi_{00}^n,\pi_{01}^n,\pi_{11}^n)$ as $n\to\infty$. The first step in achieving 
this is to show almost-sure convergence of $v(\pi^n)$ as $n\to\infty$, where $v(u):=\lim_{t\to\infty}f_u(t)$. This is achieved in the 
following lemma.

\begin{lemma}
As $n\to\infty$, $v(\pi^n)$ converges almost surely.
\end{lemma}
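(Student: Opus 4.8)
The plan is to reduce the convergence of $v(\pi^n)$ to that of a single scalar quantity and then to recognise that quantity as a bounded martingale. First I would read off from the explicit solution of \eqref{6279} that $v(u)=\lim_{t\to\infty}f_u(t)$ depends on $u=(x_0,y_0,z_0)$ only through the linear combination $s:=2x_0+y_0$; letting $e^{-t}\to0$ in the formulas for $x_u,y_u,z_u$ gives
\[v(u)=\Psi(s):=\Big(\tfrac{s^2}{4},\,s-\tfrac{s^2}{2},\,1-s+\tfrac{s^2}{4}\Big),\qquad s=2x_0+y_0,\]
where $\Psi$ is continuous (here $s/2$ is exactly the frequency of the $0$-allele and $\Psi(s)$ is the associated Hardy--Weinberg triple in $\mathcal{S}$). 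Hence it suffices to prove that $s(\pi^n)=2\pi_{00}^n+\pi_{01}^n$ converges almost surely; the convergence of $v(\pi^n)=\Psi(s(\pi^n))$ then follows by continuity of $\Psi$.

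Next I would show that $M_n:=s(\pi^n)=2\pi_{00}^n+\pi_{01}^n$ is a martingale for the filtration $\mathcal{F}_n=\sigma(\pi^0,\dots,\pi^n)$. Since $s$ is linear and $(n+3)\pi^{n+1}=(n+2)\pi^n+\delta_{n+1}$, applying $s$ gives $(n+3)M_{n+1}=(n+2)M_n+s(\delta_{n+1})$, where $s(\delta_{n+1})$ equals $2,1,0$ according as $\delta_{n+1}=(1,0,0),(0,1,0),(0,0,1)$. Taking conditional expectations and using that $\delta_{n+1}$ has law $F(\pi^n)=(F_1,F_2,F_3)(\pi^n)$ yields
\[\E[s(\delta_{n+1})\mid\mathcal{F}_n]=2F_1(\pi^n)+F_2(\pi^n).\]

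The key step is the algebraic identity $2F_1+F_2=2x+y$ on the simplex $\{x+y+z=1\}$. Writing $F_1=(x+y/2)^2$ and $F_2=xy+yz+2xz+y^2/2$, substituting $z=1-x-y$ and expanding makes all quadratic terms cancel, leaving exactly $2x+y=s$. Consequently $\E[s(\delta_{n+1})\mid\mathcal{F}_n]=s(\pi^n)$, and therefore
\[\E[M_{n+1}\mid\mathcal{F}_n]=\frac{(n+2)M_n+s(\pi^n)}{n+3}=M_n,\]
so $(M_n)$ is indeed a martingale. This identity is the heart of the matter: it expresses the fact that, under the reproduction law $(a)$--$(d)$, the allele frequency is preserved in conditional mean, which is what makes $v$ an (asymptotically, in fact exactly) conserved functional.

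Finally, since $\pi^n$ lives on the simplex we have $M_n\in[0,2]$, so $(M_n)$ is a bounded, hence $L^1$-bounded, martingale and converges almost surely by the martingale convergence theorem. Applying the continuous map $\Psi$ then gives the almost-sure convergence of $v(\pi^n)$, which is the assertion of the lemma. I expect the only genuinely nontrivial point to be spotting that $v$ factors through the linear functional $s$ and that $s$ is an \emph{exact} (not merely asymptotic) martingale; verifying the identity $2F_1+F_2=s$ and invoking martingale convergence are then routine.
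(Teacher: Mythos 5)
Your proof is correct, and it takes a genuinely different route from the paper. The paper argues pathwise that $(v(\pi^n))_n$ is Cauchy: it splits $|v(\pi^{n+1})-v(\pi^n)|$ into a deterministic drift term, controlled by a Taylor expansion of $t\mapsto f_{\pi^n}(t)$ at $t=0$ together with the flow-invariance $v=v\circ f_\cdot(t)$, and a noise term controlled by the Lipschitz property of $v$. You instead observe that $v$ factors through the linear functional $s(u)=2x_0+y_0$ (twice the $0$-allele frequency) as $v=\Psi\circ s$ with $\Psi$ continuous, verify the identity $2F_1+F_2=2x+y$ on the simplex, conclude that $M_n=s(\pi^n)$ is an exact bounded martingale, and invoke the martingale convergence theorem. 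Your identity is correct (with the $F_i$ understood, as in your write-up and in the paper's displayed transition probabilities, as the actual replacement probabilities $(x+y/2)^2$, $xy+yz+2xz+y^2/2$, $(z+y/2)^2$), and it is exactly the classical conservation in mean of allele frequencies under random mating. Your approach buys two things. First, robustness: the paper's bound on the noise term is only $O(1/n)$, which is not summable, so the stated conclusion that the sequence is ``Cauchy (surely)'' does not follow from the two displayed bounds alone --- one must additionally use that $\eta_n$ is a martingale difference so that $\sum_n \frac{1}{n+3}\eta_n$ converges a.s.; your argument never meets this issue. Second, structure: you identify the limit explicitly as $\Psi(M_\infty)$, which already exhibits the Hardy--Weinberg form of the limit used in Theorem \ref{T:main}. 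What the paper's approach buys in exchange is generality: the Cauchy/ODE-flow argument does not require $v$ to factor through a conserved linear functional, and it connects directly to the stochastic-approximation machinery (the limit-set theorem of \cite{sc}) invoked in the proof of Theorem \ref{T:main}.
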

\begin{proof}
We shall show that almost surely, $(v(\pi^n))_n$ is a Cauchy sequence. We have
\begin{align}\label{Eq:v1}
|v(\pi^{n+1})-v(\pi^n)|\le\left|v\left(\pi^n+\frac1{n+3}F(\pi^n)\right)-v(\pi^n)\right|+\left|v(\pi^{n+1})-v\left(\pi^n+\frac1{n+3}F(\pi^n)\right)\right|.
\end{align}
We provide upper bounds on each term appearing on the right-hand side. Firstly, using the fact that $v(x)=v(f_x(t))$ for any $t\ge0$,
\[
\left|v\left(\pi^n+\frac1{n+3}F(\pi^n)\right)-v(\pi^n)\right|=
\left|v\left(\pi^n+\frac1{n+3}F(\pi^n)\right)-v\left(f_{\pi^n}\left(\frac1{n+3}\right)\right)\right|.
\]
We have the explicit form of $v$ as 
\[
v(u)=\left(\frac{(2x_0+y_0)^2}{4},-\frac{(2x_0+y_0)^2}{2}+2x_0+y_0,1+ \frac{(2x_0+y_0)^2}{4}-2x_0-y_0\right),
\]for any $u=(x_0,y_0,z_0)$. The function $v$ is clearly Lipschitz on $[0,1]^3$ and so there exists a constant $c$ such that 
\begin{align*}
\left|v\left(\pi^n+\frac1{n+3}F(\pi^n)\right)-v\left(f_{\pi^n}\left(\frac1{n+3}\right)\right)\right|&\le c\left|\pi^n+\frac1{n+3}
F(\pi^n)-f_{\pi^n}\left(\frac1{n+3}\right)\right|\\&\le O(1/n^2),
\end{align*}
since $f_{\pi^n}(1/(n+3))=f_{\pi^n}(0)+\frac1{n+3}f'_{\pi^n}(0)+O(1/n^2)=\pi^n+\frac1{n+3}F(\pi^n)+O(1/n^2)$. For the second term 
on the right-hand side of (\ref{Eq:v1}), we have 
\[
\left|v(\pi^{n+1})-v\left(\pi^n+\frac1{n+3}F(\pi^n)\right)\right|\le 
c\left|\pi^{n+1}-\pi^n-\frac1{n+3}F(\pi^n)\right|\le\frac{c}{n+3}|\eta_n-\pi^n|,
\]
by the definition of $\pi^n$, see (\ref{6179}). However since $F$ is bounded we deduce that we can upper bound this term by $O(1/n)$. 
Plugging the two bounds we have obtained into equation (\ref{Eq:v1}) shows that the sequence $(v(\pi^n))_n$ is indeed Cauchy (surely), 
and this completes the proof.
\end{proof}

We are now in a position to show almost-sure convergence of the stochastic process $\pi^n=(\pi_{00}^n,\pi_{01}^n,\pi_{11}^n)$, $n\ge1$.
 
\begin{thm}\label{T:main} The random vector $\pi^n=(\pi_{00}^n,\pi_{01}^n,\pi_{11}^n)$, $n\ge1$ has the following asymptotic behaviour:
\[\pi^n\stackrel{\mbox{\it\footnotesize a.s.}}{\longrightarrow}(\pi_{00},\pi_{01},\pi_{11})\,,\;\;\mbox{as $n$ tends to $+\infty$}\,,\]
where $(\pi_{00},\pi_{01},\pi_{11})$ is distributed on $\mathcal{S}$. In particular, it satisfies the 
Hardy-Weinberg equilibrium:
\[\pi_{01}=2\sqrt{\pi_{00}\pi_{11}}\,. \]
\end{thm}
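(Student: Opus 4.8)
The plan is to combine the Lemma with a quantitative control of how fast $\pi^n$ approaches the equilibrium curve $\mathcal S$. Writing $u=(x_0,y_0,z_0)$, the explicit form of $v$ shows that the displacement of $u$ from its own limit point is always a multiple of the fixed direction $(1,-2,1)$: a direct computation gives $u-v(u)=\phi(u)\,(1,-2,1)$, where
\[
\phi(u):=x_0-\tfrac14(2x_0+y_0)^2 .
\]
In particular $\phi$ vanishes exactly on $\mathcal S$ (equivalently, on the image of $v$), and $|u-v(u)|=\sqrt6\,|\phi(u)|$. Since the Lemma already provides $v(\pi^n)\to L$ almost surely for some random $L$, and since $v(p)=p$ for every $p\in\mathcal S$ (a zero of $F$ is a stationary point of the flow), it suffices to prove that the scalar transverse coordinate $\phi(\pi^n)$ tends to $0$ almost surely; then $\pi^n=v(\pi^n)+\phi(\pi^n)(1,-2,1)\to L$, and $L=\lim_n v(\pi^n)$ lies in the closed set $\mathcal S$. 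The Hardy--Weinberg identity is then read off by parametrising $\mathcal S$: setting $s=2x_0+y_0$, every zero of $F$ has the form $(s^2/4,\;s-s^2/2,\;(1-s/2)^2)$, for which $2\sqrt{\pi_{00}\pi_{11}}=2\cdot\tfrac s2(1-\tfrac s2)=s-\tfrac{s^2}{2}=\pi_{01}$.

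The key structural fact is that $\phi$ decays exponentially along the flow. Because $2x+y$ is conserved along trajectories, the explicit formulas give $\phi(f_u(t))=\phi(u)e^{-t}$, and differentiating in $t$ yields the identity $\nabla\phi\cdot F=-\phi$ on the whole simplex. I would feed this into the recursion (\ref{6179}). Recalling from the urn construction that $\mathbb E[\delta_{n+1}\mid\mathcal F_n]=F(\pi^n)+\pi^n$, the centred noise $\eta_n-\pi^n=\delta_{n+1}-\mathbb E[\delta_{n+1}\mid\mathcal F_n]$ is a bounded martingale difference, and (\ref{6179}) reads $\pi^{n+1}-\pi^n=\frac1{n+3}\bigl(F(\pi^n)+(\eta_n-\pi^n)\bigr)$. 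Since $\phi$ is a quadratic polynomial with constant Hessian, a second-order Taylor expansion (with exact remainder of size $O(1/n^2)$) together with $\nabla\phi\cdot F=-\phi$ yields the scalar recursion
\[
\phi(\pi^{n+1})=\Bigl(1-\tfrac1{n+3}\Bigr)\phi(\pi^n)+\tfrac1{n+3}M_n+O(1/n^2),
\qquad M_n:=\nabla\phi(\pi^n)\cdot(\eta_n-\pi^n),
\]
where $(M_n)$ are bounded martingale differences.

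Finally I would solve this linear recursion explicitly. With step $1/(n+3)$ the product $\prod_{k=0}^{n-1}(1-\frac1{k+3})=\frac{2}{n+2}$ telescopes, so that
\[
\phi(\pi^n)=\frac{2}{n+2}\,\phi(\pi^0)+\frac{1}{n+2}\sum_{m=0}^{n-1}M_m+\frac{1}{n+2}\,O(\log n).
\]
The first and third terms vanish as $n\to\infty$, and the martingale term vanishes almost surely by the strong law for martingales: $(M_m)$ being bounded, $\sum_m (m+1)^{-2}\mathbb E[M_m^2]<\infty$, so $\sum_m M_m/(m+1)$ converges a.s.\ and Kronecker's lemma gives $\frac1{n+2}\sum_{m<n}M_m\to0$ a.s. Hence $\phi(\pi^n)\to0$ almost surely, which by the reduction of the first paragraph completes the proof.

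The main obstacle is precisely the almost-sure control of the noise: one must check that the fluctuations $M_n$, although only of order $1/n$ per step and hence not absolutely summable, average out in the Cesàro sense. This is exactly what the martingale strong law supplies, and it is the single genuinely probabilistic ingredient; everything else is the deterministic exponential stability of $\mathcal S$ encoded in the identity $\nabla\phi\cdot F=-\phi$.
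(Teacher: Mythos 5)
Your argument is correct, and at the one point where the real work happens it takes a genuinely different route from the paper. Both proofs rest on the Lemma (almost-sure convergence of $v(\pi^n)$) and on the decomposition of $\pi^n$ into $v(\pi^n)$ plus a remainder that must be shown to vanish; the difference is how that remainder is handled. The paper invokes Theorem 2.2 of \cite{sc} as a black box: the limit set of the urn process is an internally chain recurrent set of the flow (\ref{6279}), hence contained in $\mathcal{S}$, and then $|x-v(x)|\to0$ as $x\to\mathcal{S}$ since $v$ is Lipschitz and fixes $\mathcal{S}$. You instead exploit the special product structure $F=-\phi\cdot(1,-2,1)$ with $\phi(u)=x_0-(2x_0+y_0)^2/4$ (which indeed holds on the simplex, and which also shows $\mathcal{S}=\{\phi=0\}$ and $\nabla\phi\cdot F=-\phi$), so that the remainder is exactly $\phi(\pi^n)(1,-2,1)$; you then drive $\phi(\pi^n)$ to zero by an explicit linear recursion, the martingale strong law and Kronecker's lemma. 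Your reading of the noise term --- that $\eta_n-\pi^n=\delta_{n+1}-\E[\delta_{n+1}\mid\mathcal{F}_n]$ is a bounded martingale difference --- is the correct resolution of the paper's slightly inconsistent statement that the law of $\delta_{n+1}$ is ``$F(i_n)$'' (it must be $F(i_n)+i_n$ for the transition probabilities displayed earlier to be probabilities), and your telescoping product $\prod_{k=0}^{n-1}(1-\frac{1}{k+3})=\frac{2}{n+2}$ and error bookkeeping are exact. What your route buys is self-containedness (no appeal to the chain-recurrence theorem) and quantitative control of the transverse coordinate, which is relevant to the paper's Remark 2 on rates of convergence; what the paper's route buys is brevity and independence from the accidental fact that $F$ points in a single fixed direction.
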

\begin{proof}
We first claim that almost surely, the $L^1$ distance between $\pi^n$ and $\mathcal{S}$ tends to 0 as $n\to\infty$. Recall that the $L^1$ 
distance $|\pi^n-\mathcal{S}|$ is defined as 
\[
|\pi^n-\mathcal{S}|:=\min_{s\in\mathcal{S}}\{|\pi^n-s|\}:=\min_{(x,y,z)\in\mathcal{S}}\{|\pi_{00}^n-x|+|\pi_{01}^n-y|+|\pi_{11}^n-z|\}
.\] In fact, this is a consequence of Theorem 2.2 in \cite{sc} which asserts that the limit set of $(\pi^n)$ (i.e.  the set of limits of 
subsequences 
of $(\pi^n)$) is almost surely a connected compact internally chain recurrent set for the flow associated to the 
ODE (\ref{6279}). In particular the limit set of $(\pi^n)$ is included in $\mathcal{S}$, which implies that the distance between 
$\pi^n$ and $\mathcal{S}$ tends almost surely to 0.

Suppose $x\in\mathcal{S}$ so that $F(x)=0$ by definition. Then $\frac{d}{dt}f_x(t)=0$ for all $t\ge0$ and so $f_x(t)=x$ for all 
$t\ge0$, and in particular $v(x)=x$. Since $v$ is Lipschitz and $v(\mathcal{S})=\mathcal{S}$ we have that, as $x\to\mathcal{S}$, 
$|v(x)-x|\to0$. 
But since $v(\pi^n)$ converges almost surely to some limit random variable, we deduce that $\pi_n$ also converges almost 
surely and to the same limiting random variable.

Finally, Hardy-Weinberg equilibrium follows readily from the fact that $(\pi_{00},\pi_{01},\pi_{11})$ is distributed on the set $\mathcal{S}$,
i.e. $F(\pi_{00},\pi_{01},\pi_{11})=0$.
\end{proof}

\noindent In this theorem, an additional information is brought  by the Hardy-Weinberg principle which provides a 
relationship between  the allelic frequencies and the genotypic frequencies. This equilibrium was predictable and is actually 
a natural consequence of the absence of any evolutive forces. \\

Let us now consider the general case $m\ge1$. We denote by  $\pi_G$ the frequency of a genotype 
$G=(G_1,\dots,G_m)\in\{00,01,11\}^m$. If $\pi_{i,00}$, $\pi_{i,01}$ and  $\pi_{i,11}$, are respectively the limiting 
gene frequencies of the $i$-th gene with alleles $0$ and $1$, then from the independence between genes 
(see condition $(c)$ in the previous subsection), the limiting frequency of the 
genotype $G$ at equilibrium is
\[\pi_G=\pi_{1,G_1}\pi_{2,G_2}\dots\pi_{m,G_m}\,.\]

\begin{remark} It is a quite challenging question to determine the exact distribution of the limit triplet $(\pi_{00},\pi_{01},\pi_{11})$. 
Actually our simulations show that it may have a diffuse distribution in the set  $\{(x,y,z)\in[0,1]^3:x+y+z=1\}$, which depends 
on the initial values $\pi_{00}^0$, $\pi_{01}^0$ and $\pi_{11}^0$, see Figure $\ref{dfs}$.
\end{remark}

\begin{figure} 
\includegraphics[width=15cm]{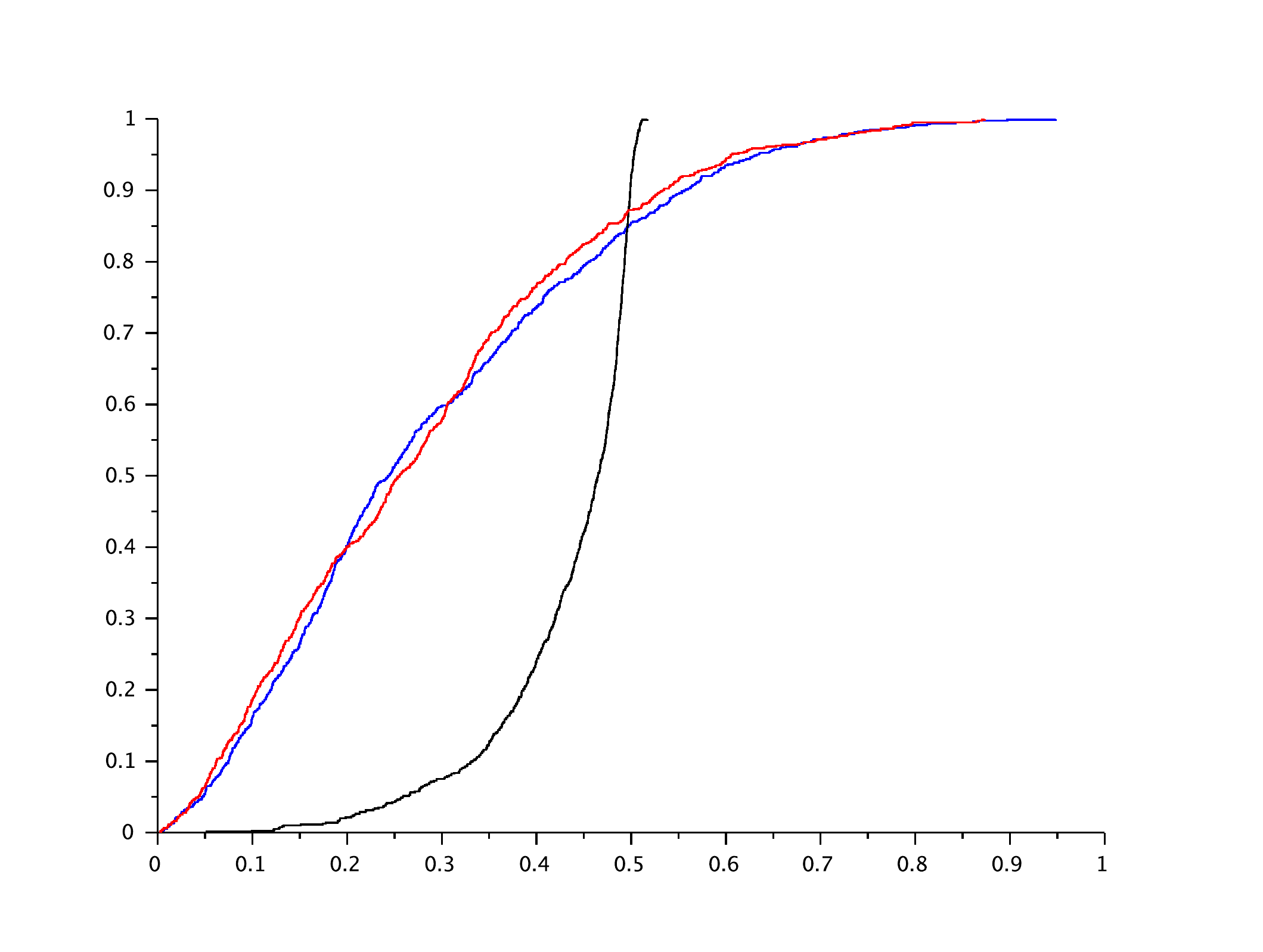}
\includegraphics[width=15cm]{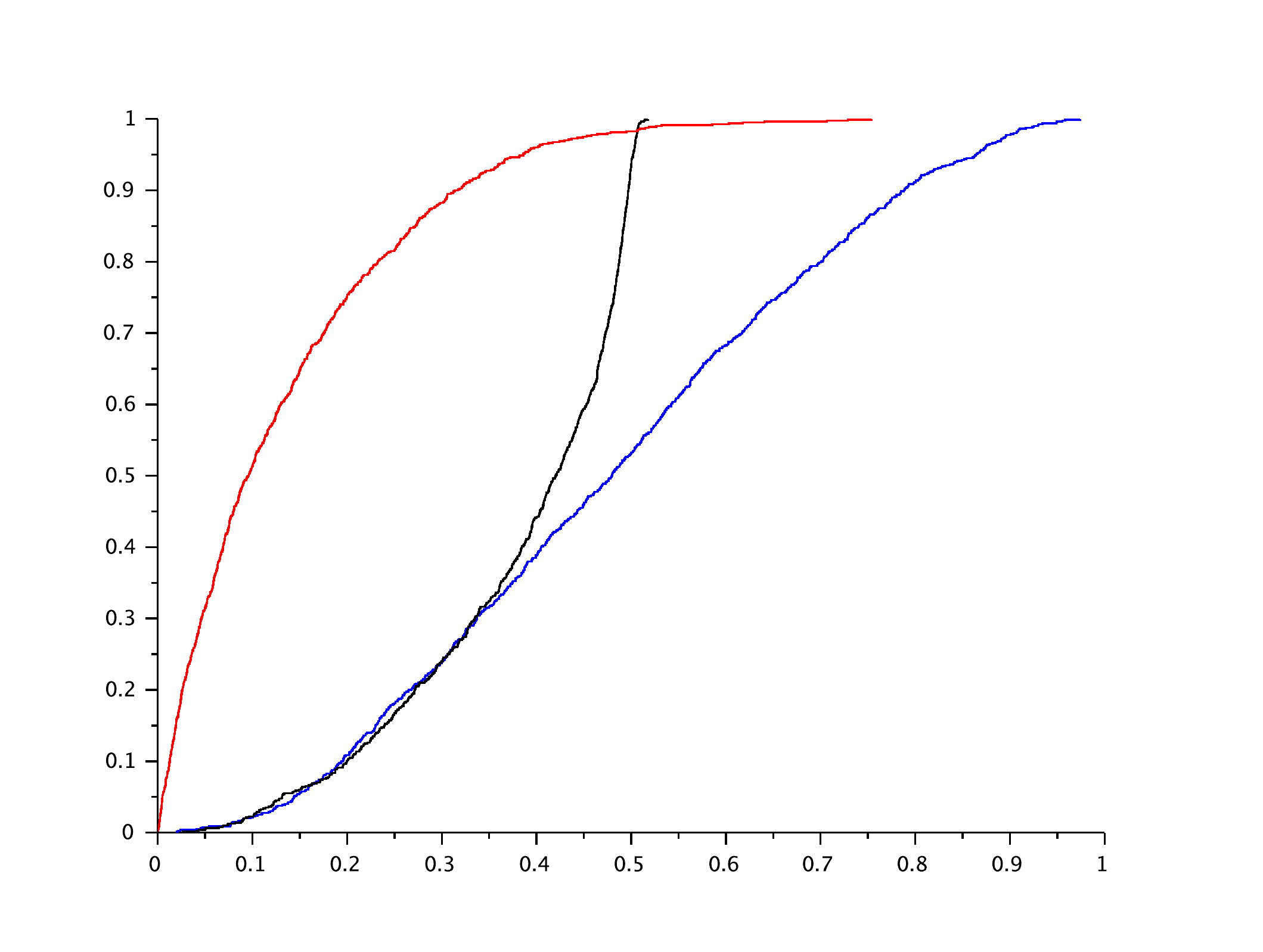}
\caption{Empirical distribution functions of $\pi_{00}$ (blue), $\pi_{11}$ (red) and $\pi_{01}$ (black). The first 
figure is obtained with initial values $\pi_{00}^0=1$, $\pi_{01}^0=2$, $\pi_{11}^{0}=3$ and the second one
is obtained with $\pi_{00}^0=1$, $\pi_{01}^0=1$, $\pi_{11}^{0}=0$.}\label{dfs}
\end{figure}

\begin{remark}
A subsequent question to Theorem \ref{T:main} concerns the speed of convergence of $(\pi_{00}^n,\pi_{01}^n,\pi_{11}^n)$. 
Some results in this direction are given in $\cite{de}$ and $\cite{hmps}$. However, they require some strong
assumptions on the derivative of the function $F$ at the limiting point $(\pi_{00},\pi_{01},\pi_{11})$, which 
are quite difficult to verify in our situation, mainly due to the fact that we do not know the distribution of  
$(\pi_{00},\pi_{01},\pi_{11})$. However, it is reasonable to expect that a central limit type  theorem holds, 
in which case, the speed of convergence of $(\pi_{00}^n,\pi_{01}^n,\pi_{11}^n)$ to $(\pi_{00},\pi_{01},\pi_{11})$ 
would be of order $\sqrt{n}$.
\end{remark}

\section{Application of the model}\label{re}

Our model were tested on a population of 85 living accessions representing the common broom 
\emph{Cytisus scoparius} and three related interspecific hybrids.  This dataset consists in 62 vegetatively propagated
cultivars obtained from various nurseries. These cultivars belong to either {\it Cytisus scoparius}, {\it Cytisus} x 
{\it dallimorei} (hybrid between {\it C. scoparius} and {\it C. multiflorus}), {\it C.} x {\it praecox} (hybrid between {\it C. multiflorus} 
and {\it C. oromediterraneus}), or {\it C.} x {\it booskopii} (hybrid between {\it C.} x {\it dallimorei} and {\it C.} x {\it praecox}). 
In addition three to nine individuals obtained from five wild populations have been included (3 individuals of {\it Cytisus 
oromediterraneus} from France, 3 individuals of {\it Cytisus scoparius} from Italia, 3 from Poland, 4 from Angers, France and 9 
from Ern\'ee, France). For all these samples, DNA extration use the Nucleospin\textregistered Plant II kit 
from macherey-Nagel. IISR data was obtained using six set 
of primers, namely ISSR5 (sequence: 5¡¯-CACACACACACACACARC-3¡¯), ISSR7 (sequence : 
5¡¯-CACACACACACACACART-3¡¯), ISSR13 (sequence:\\ 5¡¯-GTGTGTGTGTGTGTGTYA-3¡¯), 
ISSR890 (sequence: 5¡¯-VHVGTGTGTGTGTGTGT-3¡¯), ISSR891 (sequence :
5¡¯-HVHTGTGTGTGTGTGTG-3¡¯) and ISSRa (sequence: 5¡¯-GCTCTCTCTCTCTCTC-3¡¯).
Polymerase chain reaction (PCR) was done using the following parameters : $95^o$C for 2 min., then 
39 cycles of $95^o$C for 30 sec., $50^o$C for 30 sec., $72^o$C for 120 sec., followed by 10 min. of
extension at $72^o$C. Electrophoresis was done on 5\% acrylamide-bisacrylamide gel (mixing ratio : 29:1), 
with 7M urea, with a pre-run of 30 min at 80 W, then 2h30 at 60W. Staining use silver nitrate. Gels  
were scanned and band manualy read.

Using data obtained from ISSR analysis, our present aim is to determine the most likely pedigree relating 
these individuals. A code in language $R$ has been written according to the model described in the previous
sections. The latter applied to our data provided the pedigrees presented in figures \ref{fig1},  \ref{fig2} and \ref{fig3} 
below. The use of this method first requires that the population we are dealing with satisfies principles $(a)-(e)$ in 
Subsection \ref{model} and parameters $\varepsilon$, $\delta$, $p_\ell$ and $q_\ell$ must be inferred from our data. 

Breedings have occurred over time under the action of professional breeders or
according to natural phenomenons and with no more information, assumption $(a)$ about uniform 
prior distribution is reasonable.  According to botanists, this is also the case of assumption $(b)$ which 
means that there are no missing individuals in the population. Then we need to ensure  
the independence hypothesis (c) between the bands $\{x_\ell(g)=1\}$, $\ell\in\{1,\dots,m\}$. Depence may 
occur due to the selective sweep phenomenon which can associate together several genes whose loci are close 
to each other along the chromosome. For such sets of genes, recombination is not strong enough for them to be 
considered as independent in the reproduction process.  Then among the 424 bands, we have selected 168 of them 
which are proved to be independent from a statistical test.

We also need to determine the values of $\varepsilon$, $\delta$, $p_\ell$ and $q_\ell$ related to the present 
data, in order to construct the probability measure which is defined in $(d)$. First recall that in
the ISSR amplification, six markers allow us to test the presence or absence of those 168 bands, 
each marker corresponding to a particular set of bands (34 bands for ISSR890, 22 for ISSR 891, 31
for ISSRa, 32 for ISSR5, 27 for ISSR7 and 22 for ISSR13). For each of the six markers used, in order to apply the above 
model, we need to estimate the values of $\delta$ and $\eps$ (the errors probability, which can occur during the experiment). 
We achieve this by repeatedly crossing two individuals (G017 \emph{Cytisus scoparius 'Lunagold'} and G010 
\emph{Cytisus x dallimorei 'Burkwoodii'}) and 
performing marker analysis  (using 5 of the 6 markers used for the dataset) on the resulting offspring  (n=33 plants). We are 
then able to estimate, for each marker, the value of $\delta$. Denoting by $\delta_m$ the error using marker $m$, we assume 
that $\delta_m$ is a Gaussian random variable such that  $\var(\delta_m)=\var(\delta_{m'})$ for all markers $m,\,m'$.  We 
obtained the following average errors: 
\begin{eqnarray*}
&&\E(\delta_{ISSRa})=0.16,\, \E(\delta_{ISSR890})=0.16,\,\\
&&\E(\delta_{ISSR891})=0.14,\, \E(\delta_{ISSR5})=0.19,\,\E(\delta_{ISSR7})=0.1.
\end{eqnarray*}
For each pair of markers, $m$ and $m'$, we ran a hypothesis test to determine whether $\E(\delta_m)=\E(\delta_{m'})$ 
and we found that we do not reject this null hypothesis at a 95\% confidence level. We obtained a 95\% confidence interval 
of $(0.126,0.195)$ for the error, under the assumption that the errors from the different  markers all came from the same 
distribution. For the present reconstructions we have chosen the value $\delta=0.15$. The same study for the error $\eps$ 
leads us to the choice of $\eps=0.05$.\\ 

In subsection \ref{conv} we proved convergence of gene frequencies and we will assume that the population which 
is considered here has attained some equilibrium, that is principle $(e)$.  As can be seen from equation (\ref{pandq}), 
thanks to Hardy-Weinberg  principle, the probabilities $p_\ell$ and $q_\ell$ only depend on the probability $\pi_{00}$. 
We emphasize that the latter probability is actually the only one whose empirical value can be determined from the 
data. Indeed it is not possible to distinguish the genotype $01$ from the genotype $11$ in ISSR data. In the present case, we 
obtain the values of $\pi_{00}$ and hence $p_\ell$ and $q_\ell$ for each band.\\

The probabilities $\mu_i(j,k)$ defined in the end of Subsection \ref{model} may appear quite low once computed 
from our dataset. However knowing that all individuals belong to the same family, we are only concerned with their 
relative values. The pedigrees appearing in figures \ref{fig1}, \ref{fig2} and \ref{fig3} were obtained with the threshold 
probabilities $0.1$ and $0.2$ and $0.3$ respectively. Funders have been represented in black and individuals with no 
parent and children have not been represented.  As expected, when the threshold probability $p$ increases, 
the number of relations between individuals decreases  and more  
individuals are considered as founders. Compared to the existing knowledge we have on the 
group (see \cite{au}), several relationships are congruent with historical information. 
For example, 'Zeelandia' is reported as a descendant of 'Burkwoodii' and a C. x praecox. This relationhip appears with 
all threshold probabilities. 'Liza', 'Andreanus Select', and 'Donard Seedling' are all all historically reported as sport 
(bud mutations) of 'Burkwoodii', while 'Lena' is supposed to be a seedling of it. They are all linked under $p=0.1$ and $p=0.2$, 
while under higher threshold probability 'Burkwoodii', 'Liza' and 'Andreanus Select' are still linked, however, Donard Seedling is 
treated as a seedling of 'Burkwoodii' and Cytisus ardoinoi which may be impossible (the sample used for representing this last 
species being wild collected). 'Firefly is reported as a seedling of 'Andreanus', which appears under all threshold probabilities. 
Comparing to historical information, 'La Coquette' appears here as founder, and as parent of 'Roter Favorit' while it was reported 
as a self-fecondation of 'Hollandia', and half-brother of 'Boskoop Ruby'. 'Hollandia' is know to be a seedling from 'Burkwoodii' and 
C. x praecox, here, under p=0.1, it is a seedling between the same 'Burkwoodii' but with C. scoparius. Using the same ISSR data, 
Auvray in \cite{au} points out the putative link between 'Apricot Gem' and 'Dukaat', as well as between 'Boskoop Ruby' and
'Windlesham'. These links are re-inforced here and second putative parents are provided (kewensis for 'Apricot Gem' and 'Hollandia' 
for 'Windlesham'). Auvray [1] also point out a parentage between 'Moclard Pink' and 'Minstead' (the former being a putative seedling 
of the later), here 'Moclard Pink' is always linked with 'Albus', a point which needs consideration. Under the various threshold 
probabilities, 'Luna', 'Palette' and 'Roter Favorite' are linked, this seems reasonably consistent with the fact that they all have been 
obtained form the same nursery (Arnold, at Alreslohe near Holstein in Germany) around 1960. 'Jessica', linked to the same group 
under $p=0.1$ is of unknown parentage, while 'Goldfinch', also linked under $p=0.1$ is reported to be a seedling between 'Donard 
Seedling' and 'Dorothy Walpole' (laking from the sampling). The links between 'Andreanus', 'Firefly', 'Golden Sunlight' 'Andreanus Splendens', 'Golden Cascade', 'Roter Favorite' and 'Queen Mary', appearing under all threshold probabilities, reminds that all these 
cultivars are selection of C. scoparius and not of any of the interspecific hybrids.

\begin{figure} 
\includegraphics[width=16cm]{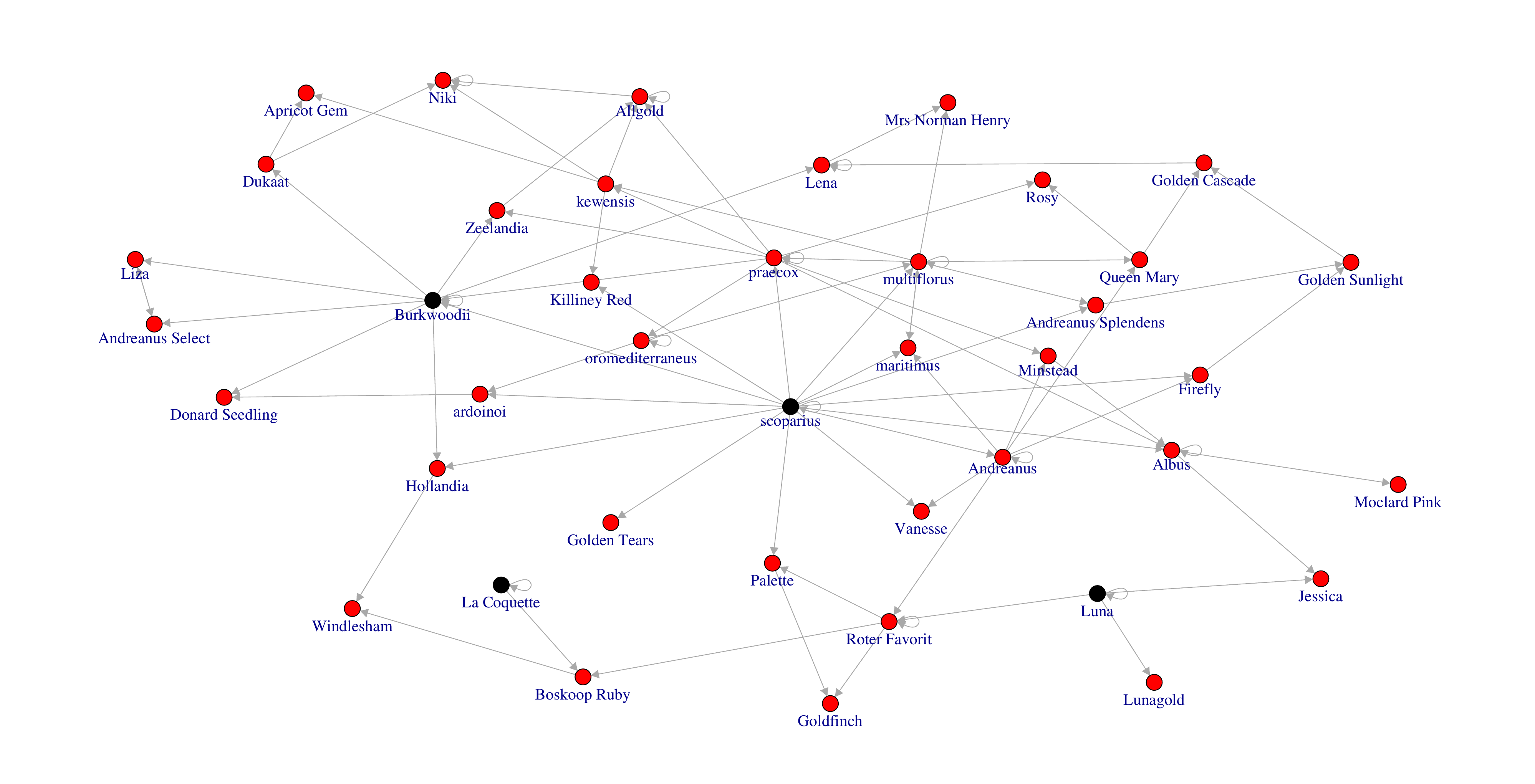}
\caption{Threshold probability $p=0.1$.}\label{fig1}
\end{figure}

\begin{figure} 
\includegraphics[width=16cm]{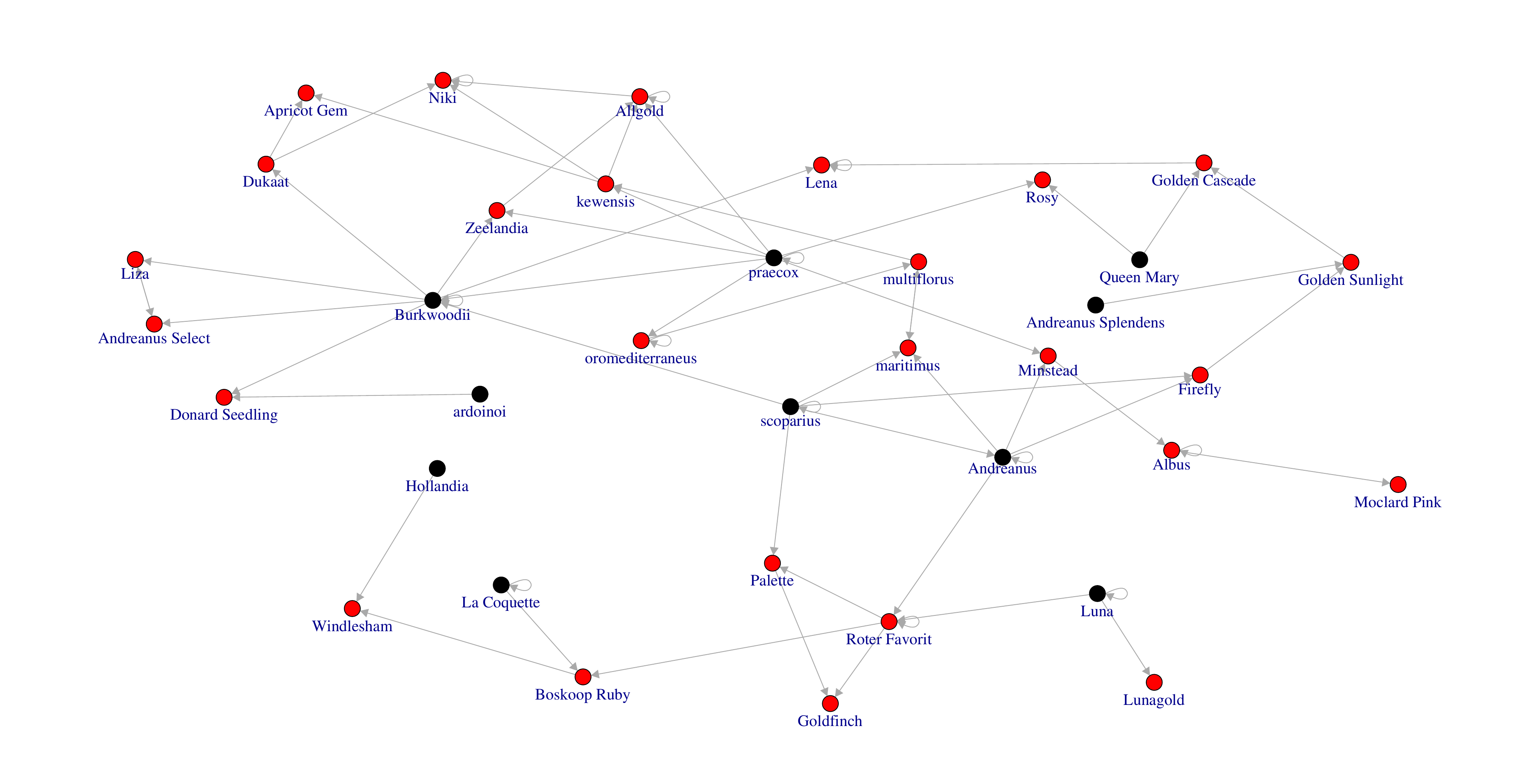}
\caption{Threshold probability $p=0.2$.}\label{fig2}
\end{figure}

\begin{figure} 
\includegraphics[width=16cm]{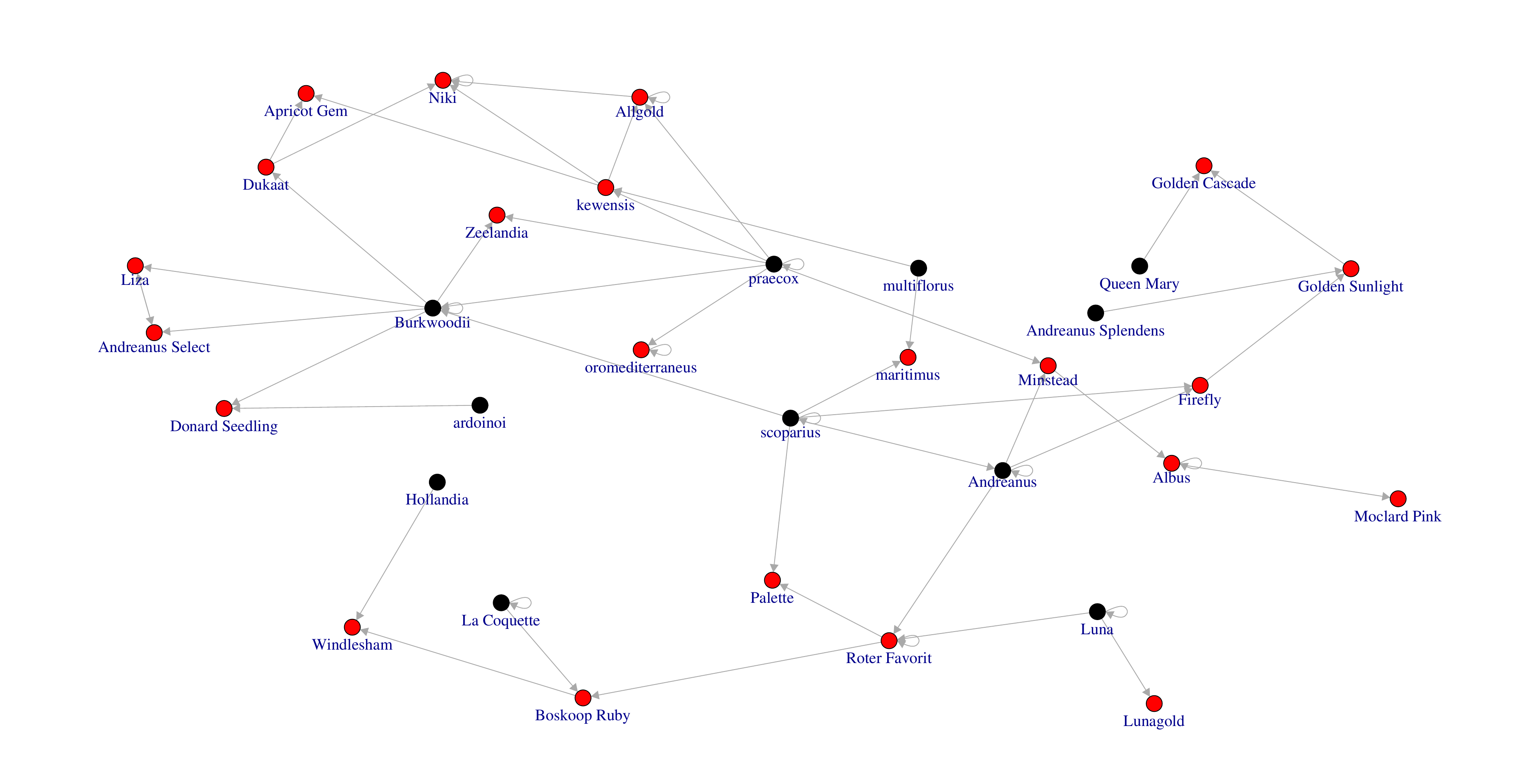}
\caption{Threshold probability $p=0.3$.}\label{fig3}
\end{figure}

\section{Discussion}\label{conc}

We have set up a mathematical model of pedigree reconstruction whose basic principle is to determine,
for each individual, what is the most likely parent pair in the population, according to the probability
distribution which is defined in $(d)$ of Subsection \ref{model}. The robustness of this model mainly
relies on the fact that gene frequencies have attained some equilibrium. We show  in Subsection
\ref{conv} that indeed, in the absence of any evolutive forces, gene frequencies converge toward a 
limit random vector which satisfies Hardy-Weinberg equilibrium. From this model we derived an
algorithm which is written in language R and then  we applied this model to ISSR data from a
population of diploid plants. The results reveal that the pedigrees obtained from this method fit to the
partial reconstructions based on botanical data or other methods using dendograms obtained from 
matrix distances. This additional source of information could also be used in order to improve the 
model by constructing a new probability distribution giving a relative weight to each kind of data. 

Greater power could also be given to our method by getting rid of assumption $(b)$ on non missing
individuals. Indeed missing individuals in the population who would actually have lots of family
relationships could considerably distort the real pedigree. Then an improvement would consist in
determining how much the addition of one or several virtual individuals with specific genomes 
increases the likelihood of the pedigree. 

Principle $(c)$ assumes that recombination is uniform, but this can be made more realistic by
determining how different sets of loci actually recombines from a preliminary statistical inference. 
Then the model can easily be adapted. 

Finally we emphasize that our model can be applied to phenotyped data. Indeed, as already
observed in Section \ref{mm}, the knowledge of ISSR is equivalent to the knowledge of the expression
of a dominant gene. Hence our model can easily be tested from a population about which we observe 
a specific set of phenotypical criteria and whose family relationship are a priori known.

\noindent {\bf Acknowledgements}
Projects EUROGENI and BRIO have been managed by V\'eronique Kapusta, while molecular and bibliographic 
information concerning Cytisus material had been acquired by Ga\"elle Auvray, Agathe Le Gloanic and 
Nad\`ege Le Pocreau. We warmly thank all of them for their help.

\end{document}